\def\blx@maxline{77}
\definecolor{DarkGreen}{RGB}{0,100,0}
\definecolor{DarkBlue}{RGB}{0,0,200}
\newcommand{\ei}[1]{{\textcolor{cyan}{[{#1}---Egor]}}}
\newcommand{\lo}[1]{{\textcolor{red}{[{#1}---Luke]}}}
\newif\ifdraft\drafttrue
\newtheorem{theorem}{Theorem}[section]
\newtheorem{proposition}[theorem]{Proposition}
\theoremstyle{definition}
\newtheorem{definition}[theorem]{Definition}
\renewcommand{\setminus}{\backslash}
\newcommand{\ria}{\rightarrow}
\newcommand{\CP}{\mathcal{P}}
\newbox\gnBoxA
\newdimen\gnCornerHgt
\newdimen\gnArgHgt
\def\Godelnum #1{%
\setbox\gnBoxA=\hbox{$#1$}%
\gnArgHgt=\ht\gnBoxA%
\ifnum     \gnArgHgt<\gnCornerHgt \gnArgHgt=0pt%
\else \advance \gnArgHgt by -\gnCornerHgt%
\fi \raise\gnArgHgt\hbox{$\ulcorner$} \box\gnBoxA %
\raise\gnArgHgt\hbox{$\urcorner$}}
\newcommand\set[1]{{ \{\, #1 \,\} }}
\title{Two statements of the Duggan-Schwartz theorem}
\author{Egor Ianovski}
\newcommand{\best}{\textnormal{best}}
\newcommand{\worst}{\textnormal{worst}}
\begin{document}

\maketitle

\begin{abstract}
The Duggan-Schwartz theorem \citep{Duggan1992} is a famous result concerning strategy-proof
social choice correspondences, often stated as ``A social choice correspondence that can be manipulated
by neither an optimist nor a pessimist has a weak dictator''. However, this formulation is actually due to \citet{Taylor2002},
and the original theorem, at face value, looks rather different. In this note we show that the two are in
fact equivalent.
\end{abstract}

\section{Definitions}

\begin{definition}
    Let $V$ be a finite set of voters, $A$ a finite set of alternatives.
    
    A profile $P$ consists of a linear order over $A$ (also known
    as a \emph{preference order} or a \emph{ballot}), $P_i$, for every voter $i$. The set of all profiles of voters $V$ over alternatives $A$ is denoted $\CP(V,A)$. We use $P_{-i}$ to refer to the ballots of all voters except $i$. Hence,
    $P=P_iP_{-i}$ and $P_i'P_{-i}$ is obtained from profile $P$ by replacing
    $P_i$ with $P_i'$.

A \emph{social choice correspondence} produces a nonempty set
    of alternatives, $F:\CP(V,A)\ria 2^A\setminus\set{\emptyset}$.
\end{definition}

\begin{definition}\label{def:sp}
    Let $\emptyset\neq W\subseteq A$. We use $\text{best}(P_i,W)$ to denote the best alternative in $W$ according to $P_i$, $\text{worst}(P_i,W)$ the worst.
    
    We extend $\succeq_i$ into two weak orders over $2^A\setminus\set{\emptyset}$:
    \begin{enumerate}
        \item 
$X\succeq_i^O Y$ iff $\best(P_i,X)\succeq_i\best(P_i,Y)$.
        \item 
$X\succeq_i^P Y$ iff $\worst(P_i,X)\succeq_i\worst(P_i,Y)$.
    \end{enumerate}
    
    A social choice correspondence is \emph{strategy-proof for optimists} (SPO) if for all $P_i'$, whenever $F(P_iP_{-i})=W$ and $F(P_i'P_{-i})=W'$,
    $W\succeq_i^O W'$.   
    A social choice correspondence is \emph{strategy-proof for pessimists} (SPP) if for all $P_i'$, whenever $F(P_iP_{-i})=W$ and $F(P_i'P_{-i})=W'$,
    $W\succeq_i^P W'$.
\end{definition}

\begin{definition}
    Given a social choice correspondence $F$, a \emph{weak dictator} is some $i\in V$ such that the first choice of $i$ is always in $F(P)$.    
\end{definition}

\section{Proofs}

\begin{theorem}[\cite{Taylor2002}]\label{thm:Taylor}
    Let $F$ be a social choice correspondence that satisfies SPP, SPO and is onto with respect
    to singletons. That is, for every $a\in A$ there exists a $P$ such that $F(P)=\set{a}$.
    
    For $|A|\geq 3$, $F$ has a weak dictator.
\end{theorem}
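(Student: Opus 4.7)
The plan is to derive \cref{thm:Taylor} from the original Duggan-Schwartz theorem, in keeping with this note's stated goal of showing the two formulations equivalent. Onto-to-singletons and the weak dictator conclusion appear verbatim in both statements, so the task reduces to verifying that SPO $\wedge$ SPP implies the single strategy-proofness axiom of the original formulation. The abstract's remark that the two ``look rather different'' suggests that axiom is cast in an expected-utility or sure-thing style rather than as a conjunction of set-extension conditions.

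I would first pin down the original axiom precisely, then argue its contrapositive. Suppose some voter $i$ can manipulate $F$ at $P$ via $P_i'$ in the original sense, witnessed by a cardinal utility $u$ consistent with $P_i$. When $u$ concentrates near $\best(P_i,\cdot)$ the manipulation immediately yields an SPO violation, and when concentrated near $\worst(P_i,\cdot)$ it yields an SPP violation; so the only troublesome case is when $W = F(P_iP_{-i})$ and $W' = F(P_i'P_{-i})$ share both $\best$ and $\worst$ but differ at middle ranks, with a strict expected-utility gain arising only at intermediate elements. At the single-profile level SPO $\wedge$ SPP is strictly weaker than an expected-utility axiom on account of exactly such pairs, so onto-to-singletons has to do real work in ruling this case out globally.

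The main obstacle, therefore, is the middle-rank case. My plan is to use onto-to-singletons to produce auxiliary profiles $Q$ collapsing $F(Q)$ to specific singletons chosen to match the middle-ranked alternatives of concern, and to interpolate between $P_i'P_{-i}$ and $Q$ along a chain of single-voter switches, tracking the outcome set at each link. Along the chain, $\best$ or $\worst$ must strictly shift at some step, since $W'$ contains the problematic middle-ranked alternative while the terminal singleton does not. Combined with the original misreport $P_i \to P_i'$, this shift can be composed into a strict optimist or pessimist improvement at some intermediate profile, contradicting SPO or SPP. Once this reduction is in place, \cref{thm:Taylor} follows immediately from the original Duggan-Schwartz theorem applied to $F$.
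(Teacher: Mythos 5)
Your high-level route is the right one and matches the paper's: \cref{thm:Taylor} is obtained by showing that SPO and SPP imply the hypotheses of \cref{thm:DSexist} and then invoking Duggan--Schwartz. But the core of your argument goes wrong at exactly the point the paper flags: the quantifier over probability functions. The Duggan--Schwartz condition asks for \emph{some} $p_i$ under which no consistent utility admits a profitable deviation, so you get to choose $p_i$, and the right choice is the one concentrated $\sfrac{1}{2}$ on $\best(P_i,X)$ and $\sfrac{1}{2}$ on $\worst(P_i,X)$ (this is legal: positivity is required only at the best and worst elements, so middle-ranked alternatives may receive probability $0$). With that choice, a profitable deviation from $X$ to $Y$ under any consistent $u_i$ reads $\tfrac{1}{2}(u_i(y_1)+u_i(y_2))>\tfrac{1}{2}(u_i(x_1)+u_i(x_2))$, which forces $u_i(y_1)>u_i(x_1)$ or $u_i(y_2)>u_i(x_2)$, i.e.\ a strict improvement in the best or the worst element --- contradicting SPO or SPP. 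Your ``middle-rank case'' therefore does not exist: it only arises if $p_i$ puts weight on intermediate alternatives, and you are free not to do that. (Your observation that SPO $\wedge$ SPP is strictly weaker than an expected-utility axiom is true for a \emph{universally} quantified $p_i$, but that is not the axiom here.)

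The chain-of-profiles construction you propose to handle that case is consequently unnecessary, and as sketched it would not work anyway: interpolating from $P_i'P_{-i}$ to an auxiliary profile $Q$ with $F(Q)$ a singleton involves switches by voters $j\neq i$, and a shift in $\best$ or $\worst$ at such a step is not a manipulation unless it is \emph{profitable for $j$}, which nothing in the construction controls. Onto-ness with respect to singletons plays no role in the reduction; it is simply passed through as a hypothesis to \cref{thm:DSexist}. Replace the second and third paragraphs of your proposal with the single probability-function choice above and the derivation closes.
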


\begin{theorem}[\cite{Duggan1992}]
    Let $F$ be a social choice correspondence that is onto with respect
    to singletons. That is, for every $a\in A$ there exists a $P$ such that $F(P)=\set{a}$.
    
    Let each voter $i$ be equipped with a probability
    function $p_i:\CP(V,A)\times A\times 2^A\ria [0,1]$ such that $\sum_{x\in X}p_i(P,x,X)=1$ and $p_i(P,a,X)>0$ whenever $a=\best(P_i,X)$
    or $a=\worst(P_i,X)$.
    
    Suppose further that for every $u_i$ consistent with $P_i$ 
    ($u_i(a)>u_i(b)$ whenever $a\succ_i b$), and for every $P_i'$, the following is true:
    $$\sum_{x\in F(P_iP_{-i})}p_i(P_iP_{-i},x,F(P_iP_{-i}))u_i(x)\geq
    \sum_{x\in F(P_i'P_{-i})}p_i(P_iP_{-i},x,F(P_i'P_{-i}))u_i(x).$$
    
    For $|A|\geq 3$, $F$ has a weak dictator.
\end{theorem}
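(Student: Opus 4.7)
The plan is to deduce this theorem from \cref{thm:Taylor}. It suffices to show that the hypothesis, together with the support condition on each $p_i$, forces $F$ to be both SPO and SPP; Taylor's theorem then supplies a weak dictator.

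For SPO, I would argue by contraposition. Suppose there are $i$, $P$ and $P_i'$ with $W := F(P_iP_{-i})$ and $W' := F(P_i'P_{-i})$ such that $a^* := \best(P_i, W') \succ_i \best(P_i, W)$. Every element of $W$ then lies strictly below $a^*$ in $P_i$, as does every element of $W' \setminus \{a^*\}$. Fix $\epsilon > 0$ and choose $u_i$ consistent with $P_i$ so that $u_i(a^*) = 1$ while $u_i(x) \in (0, \epsilon)$ for every $x \prec_i a^*$ (values above $a^*$ may be chosen arbitrarily subject to consistency). The truthful expected utility on $W$ is then less than $\epsilon$, whereas the deviation expected utility on $W'$ is at least $p_i(P_iP_{-i}, a^*, W') \cdot 1$, which is strictly positive by the support hypothesis applied to $\best(P_i, W')$. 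Taking $\epsilon$ below this positive constant contradicts the hypothesised strategy-proofness.

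For SPP I would dualise. If $a^{**} := \worst(P_i, W) \prec_i \worst(P_i, W')$ then $a^{**} \notin W'$, whence every $x \in (W \cup W') \setminus \{a^{**}\}$ satisfies $x \succ_i a^{**}$. Choose $u_i$ consistent with $P_i$ so that $u_i(a^{**}) = -M$ for large $M$ and $u_i(x) \in [0, 1]$ for every $x \succ_i a^{**}$ (values below $a^{**}$, which are irrelevant to the expected-utility sums, may be placed below $-M$ to preserve consistency). By the support hypothesis applied to $\worst(P_i, W)$, $p_i(P_iP_{-i}, a^{**}, W) > 0$, so the truthful expected utility is at most $1 - M \cdot p_i(P_iP_{-i}, a^{**}, W)$, which is arbitrarily negative in $M$; the deviation expected utility on $W'$ lies in $[0, 1]$. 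For $M$ large this again contradicts the hypothesis.

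I do not anticipate a real obstacle: the support condition on each $p_i$ at the $P_i$-best and $P_i$-worst elements is precisely engineered so that an optimistic, respectively pessimistic, utility function $u_i$ can detect a violation of $\succeq_i^O$, respectively $\succeq_i^P$. With both SPO and SPP established, \cref{thm:Taylor} supplies the weak dictator and completes the argument.
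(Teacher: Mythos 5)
Your reduction to Taylor's theorem is correct, and its substance --- using the positive probability that $p_i$ places on the $P_i$-best and $P_i$-worst elements of the relevant outcome sets to turn any optimistic or pessimistic manipulation into an expected-utility manipulation --- is precisely the contrapositive of the first half of the paper's proof of its Proposition (the paper itself only cites this theorem rather than proving it directly). The one cosmetic difference is your normalization in the pessimist case, driving $u_i(\worst(P_i,W))$ down to $-M$ rather than inflating $u_i(\worst(P_i,W'))$ as the paper does, which if anything is cleaner.
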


The notion of manipulation used by \citeauthor{Duggan1992} is obviously
more general than that of \citeauthor{Taylor2002}, and one is thus
tempted to conclude that the original theorem is weaker than
Taylor's reformulation.\footnote{A wider notion of manipulability implies
a more narrow notion of strategy-proofness, and hence the theorem
would apply to less functions.}

However, this would be erroneous as the theorems, strictly speaking,
are incomparable. Taylor's theorem concerns a social choice correspondence
$F$, whereas \citeauthor{Duggan1992}'s theorem applies to $F$ \emph{together}
with a set of probability functions, $p_i$. It is entirely plausible
that one could find two sets of probability functions such that $F$
and $p_1,\dots,p_n$ satisfy the hypotheses of the Duggan-Schwartz
theorem while $F$ and $p_1',\dots,p_n'$ do not. However, $F$
is unchanged -- it either has a weak dictator, or it does not.

To more properly compare the two theorems, then, we need to take
an existential projection over the original Duggan-Schwartz
theorem.

\begin{theorem}[\cite{Duggan1992}]\label{thm:DSexist}
    Let $F$ be a social choice correspondence that is onto with respect
    to singletons. That is, for every $a\in A$ there exists a $P$ such that $F(P)=\set{a}$.
    
    Suppose there exist probability
    functions $p_i:\CP(V,A)\times A\times 2^A\ria [0,1]$ such that $\sum_{x\in X}p_i(P,x,X)=1$ and $p_i(P,a,X)>0$ whenever $a=\best(P_i,X)$
    or $a=\worst(P_i,X)$.
    
    Suppose further that for every $u_i$ consistent with $P_i$ 
    ($u_i(a)>u_i(b)$ whenever $a\succ_i b$), and for every $P_i'$, the following is true:
    $$\sum_{x\in F(P_iP_{-i})}p_i(P_iP_{-i},x,F(P_iP_{-i}))u_i(x)\geq
    \sum_{x\in F(P_i'P_{-i})}p_i(P_iP_{-i},x,F(P_i'P_{-i}))u_i(x).$$
    
    For $|A|\geq 3$, $F$ has a weak dictator.
\end{theorem}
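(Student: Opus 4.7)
The plan is to deduce Theorem~\ref{thm:DSexist} from Taylor's theorem (Theorem~\ref{thm:Taylor}) by showing that the mere existence of probability functions $p_i$ satisfying the stated hypothesis forces $F$ to be both SPO and SPP. Once this reduction is in place, Theorem~\ref{thm:Taylor} immediately supplies the weak dictator and no further work is needed.

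For SPO we argue contrapositively. Suppose $F$ admits an optimist manipulation, witnessed by a profile $P$ and a ballot $P_i'$ with $W := F(P)$, $W' := F(P_i'P_{-i})$, and $b' := \best(P_i,W') \succ_i \best(P_i,W) =: b$. Because $P_i$ is a linear order on a finite set, we can pick a utility function $u_i$ consistent with $P_i$ that sends $u_i(b')$ to an arbitrarily large value $M$ while keeping every alternative $\preceq_i b$ inside $[0,1]$. Then $W \subseteq \set{a : a \preceq_i b}$ gives $\sum_{x \in W} p_i(P,x,W) u_i(x) \leq 1$, while the positivity hypothesis $p_i(P,b',W') > 0$ yields $\sum_{x \in W'} p_i(P,x,W') u_i(x) \geq p_i(P,b',W') \cdot M$. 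For $M$ large enough this violates the Duggan-Schwartz inequality.

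The SPP direction is dual: a pessimist manipulation gives $\worst(P_i,W') \succ_i \worst(P_i,W) =: w$, so every alternative in $W'$ lies strictly above $w$ in $P_i$. Choose $u_i$ consistent with $P_i$ sending $u_i(w)$ arbitrarily negative while keeping every alternative $\succ_i w$ inside $[0,1]$. The expected utility of $W'$ is then non-negative, whereas the term $p_i(P,w,W) u_i(w)$ drives the expected utility of $W$ to $-\infty$ by the positivity of $p_i(P,w,W)$, again violating the inequality.

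The main conceptual point is that the positivity of $p_i$ on best and worst elements is precisely what allows a pointwise optimist or pessimist manipulation to be lifted into a violation of expected-utility maximisation, so that $F$ inherits SPO and SPP from the sheer existence of the $p_i$. The utility constructions are the technical bottleneck, but amount to nothing more than assigning finitely many numbers consistent with a prescribed linear order; once they are in hand, Theorem~\ref{thm:Taylor} closes the argument.
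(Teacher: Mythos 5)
Your proposal is correct and takes essentially the same route as the paper: deriving \cref{thm:DSexist} from \cref{thm:Taylor} by showing that the existence of admissible $p_i$ forces SPO and SPP is exactly the contrapositive of the first direction of the paper's Proposition, where a Taylor-style manipulation is converted, for an arbitrary admissible $p_i$, into a violation of the expected-utility inequality via a suitably chosen consistent $u_i$. Your ``make $u_i(b')$ huge'' and ``make $u_i(w)$ very negative'' constructions play the same role as the paper's explicit utility assignments, both hinging on the positivity of $p_i$ at best and worst elements.
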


Now we claim the two theorems are equivalent.

\begin{proposition}
    $F$ satisfies the hypotheses of \cref{thm:Taylor} if and only if
    $F$ satisfies the hypotheses of \cref{thm:DSexist}.
\end{proposition}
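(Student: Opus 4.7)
The plan is to prove both implications by direct constructions. The forward direction (hypotheses of \cref{thm:Taylor} imply those of \cref{thm:DSexist}) requires the existence of suitable $p_i$, so I would simply exhibit them. The reverse direction says that the mere existence of such $p_i$ is enough to force SPO and SPP, which I would prove by contrapositive: from any candidate manipulation I would construct a utility function $u_i$ consistent with $P_i$ that violates the expected-utility inequality.

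For the forward direction, the natural choice is $p_i(P,x,X)=1/2$ when $x$ is either the best or the worst element of $X$ according to $P_i$ and these two are distinct, with $p_i(P,x,\{x\})=1$ in the singleton case. Writing $W=F(P_iP_{-i})$ and $W'=F(P_i'P_{-i})$, the expected-utility inequality then reduces (up to a factor of $2$) to
\[ u_i(\best(P_i,W))+u_i(\worst(P_i,W)) \geq u_i(\best(P_i,W'))+u_i(\worst(P_i,W')), \]
which follows termwise from SPO, SPP, and the consistency of $u_i$ with $P_i$.

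For the reverse direction, suppose SPO fails, so there exist $i,P,P_i'$ with $b:=\best(P_i,W')\succ_i\best(P_i,W)=:a$. I would choose a $u_i$ consistent with $P_i$ with $u_i(a)=1$, $u_i(b)=M$ for arbitrarily large $M$, and the remaining utilities filled in respecting $P_i$ and kept bounded. Every $x\in W$ satisfies $x\preceq_i a$, so the left-hand side of the expected-utility inequality is bounded above by $u_i(a)=1$, whereas the right-hand side contains the term $p_i(P_iP_{-i},b,W')\cdot M$ with a strictly positive coefficient by hypothesis; this forces the right-hand side to exceed the left for large $M$, a contradiction. Failures of SPP are handled dually: send $u_i(\worst(P_i,W))$ to $-N$ while keeping utilities on $W'$ non-negative, and exploit $p_i(P_iP_{-i},\worst(P_i,W),W)>0$ to drive the left-hand side below zero while the right-hand side stays non-negative.

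I do not expect a serious obstacle; the argument is essentially that positive probability on the best (resp.\ worst) element of the outcome set is exactly what couples the Duggan-Schwartz hypothesis to optimistic (resp.\ pessimistic) strategy-proofness. The only minor bookkeeping is to check that the extreme utility values at $b$ or at $\worst(P_i,W)$ can be extended to a utility function consistent with the full linear order $P_i$ on $A$, which is immediate since a linear order constrains only relative ranks, not magnitudes.
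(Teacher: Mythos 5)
Your proof is correct and follows essentially the same route as the paper: the same half-on-best, half-on-worst probability function drives the direction from SPO/SPP to the Duggan--Schwartz hypothesis, and the reverse direction likewise exploits the required positivity of $p_i$ on the best (resp.\ worst) element to build a utility function consistent with $P_i$ that witnesses a profitable deviation. Your phrasing of the first direction as a direct termwise inequality and your use of asymptotically large utility values in place of the paper's explicit numerical choices are only stylistic differences (and, if anything, sidestep a questionable concrete assignment in the paper's optimist case).
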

\begin{proof}
    We will first show that if $F$ is manipulable in the sense
    of Taylor it is manipulable in the sense of Duggan-Schwartz.
    Pay heed to the order of the quantifiers in \cref{thm:DSexist},
    as they may appear counter-intuitive:
    $F$ is strategy-proof if for \emph{some} choice of
    probability functions, for \emph{every} choice of a
    utility function, voter $i$ cannot improve his expected
    utility. Hence, $F$ is manipulable just if for \emph{every}
    choice of probability functions we can construct \emph{some}
    utility function giving voter $i$ a profitable deviation.
    
    Suppose $i$ can manipulate optimistically from $P_iP_{-i}$
    to $P_i'P_{-i}$. That is:
    \begin{align*}
    F(P_iP_{-i})=X&,\quad F(P_i'P_{-i})=Y,\\
    \best(P_i,X)=a&,\quad\best(P_i,Y)=b,\\
    a&\prec_i b.
    \end{align*}
    
    Now, let $p_i$ be any probability function in the sense
    of \cref{thm:DSexist}. Note that this means that $p_i(P,a,X)=\epsilon$
    and $p_i(P,b,Y)=\delta$ are strictly positive. Let $c\in X$ be the next-best alternative after $a$. Observe that an upper bound
    on the utility voter $i$ obtains sincerely is $\epsilon u_i(a)+(1-\epsilon)u_i(c)$, whereas the lower bound on
    the utility voter $i$ obtains from the deviation is $\delta u_i(b)$.
    All we need to do is pick a $u_i$ that satisfies:
    $$\delta u_i(b)>\epsilon u_i(a)+(1-\epsilon)u_i(c).$$
    It is of course easy to do so as, necessarily, $u_i(b)>u_i(a),u_i(c)$,
    and $\epsilon,\delta$ are constants. For example, let $u_i(a)=1,u_i(c)=2$ and $u_i(b)=\sfrac{3}{\delta}$.
    
    Suppose $i$ can manipulate pessimistically from $P_iP_{-i}$
    to $P_i'P_{-i}$. That is:
    \begin{align*}
    F(P_iP_{-i})=X&,\quad F(P_i'P_{-i})=Y,\\
    \worst(P_i,X)=a&,\quad\worst(P_i,Y)=b,\\
    a&\prec_i b.
    \end{align*}
    
    As before, let $p_i$ be any probability function in the sense
    of \cref{thm:DSexist}. This means that $p_i(P,a,X)=\epsilon$
    and $p_i(P,b,Y)=\delta$ are strictly positive. Let $c\in X$ be the best alternative in the set. Observe that an upper bound
    on the utility voter $i$ obtains sincerely is $\epsilon u_i(a)+(1-\epsilon)u_i(c)$, whereas the lower bound on
    the utility voter $i$ obtains from the deviation is $u_i(b)$.\footnote{$b$ is the worst element in $Y$, so the utility of
    any other element must be at least $u_i(b)$, and $p_i(P,y,Y)$ sums to 1.}
    All we need to do is pick a $u_i$ that satisfies:
    $$u_i(b)>\epsilon u_i(a)+(1-\epsilon)u_i(c).$$
    This time it is possible that $u_i(c)>u_i(b)$, however
    $1-\epsilon$ is strictly smaller than 1. One possibility
    is $u_i(a)=1$, 
    $u_i(b)=\frac{\sfrac{1}{\epsilon}+\epsilon+1}{1-\epsilon}$, $u_i(c)=\frac{\sfrac{1}{\epsilon}+\epsilon+2}{1-\epsilon}$.
    This leads to the following inequality, which can be verified
    algebraically:
    $$\frac{\sfrac{1}{\epsilon}+\epsilon+1}{1-\epsilon}>2\epsilon+2+\sfrac{1}{\epsilon}.$$
    
    Now suppose that $F$ is manipulable in the sense of Duggan-Schwartz.
    This means for every choice of $p_i$, there is some choice of $u_i$
    such that for some choice of $P_iP_{-i}$ and $P_i'P_{-i}$, $i$'s expected
    utility is higher in the insincere profile.
    
    Pick a $p_i$ that attaches a probability of $\sfrac{1}{2}$ to the
    best alternative in the set and $\sfrac{1}{2}$ to the worst.
    In other words, we have the following situation:
    \begin{align*}
    F(P_iP_{-i})=X&,\quad F(P_i'P_{-i})=Y,\\
    \best(P_i,X)=x_1&,\quad\best(P_i,Y)=y_1,\\
    \worst(P_i,X)=x_2&,\quad\worst(P_i,Y)=y_2,\\
    \frac{u_i(x_1)+u_i(x_2)}{2}&<\frac{u_i(y_1)+u_i(y_2)}{2}.\\
    \end{align*}
    Clearly, a necessary condition for the above to hold is that
    either $u_i(y_1)>u_i(x_1)$ or $u_i(y_2)>u_i(x_2)$. That is
    to say, $F$ is manipulable by either an optimist or a pessimist.
\end{proof}

\bibliographystyle{plainnat}             
\bibliography{references}  

\end{document}